\newtheorem{theorem}{Theorem}[section]
\newtheorem{proposition}[theorem]{Proposition}
\newtheorem{lemma}[theorem]{Lemma}
\newtheorem{definition}{Definition}[section]
\newtheorem{example}{Example}[section]
\numberwithin{equation}{subsection}
\def\T{\mathsf{T}}
\def\d{\mathrm{d}}
\def\d{\mathrm{d} }
\def\Div{\mathrm{Div}}
\def\T{\mathsf{T}}
\def\W{\pi^* \underset{^{\sim \! \! \sim \! \! \sim}}{TX}}
\def\WW{\underset{^{\sim \! \! \sim \! \! \sim}}{TX}}
\def\VE{\pi^* \underset{^{\sim \! \! \sim \! \! \sim}}{VF}}
\def\VEE{ \underset{^{\sim \! \! \sim \! \! \sim}}{VF}}
\def\TE{\underset{^{\sim \! \! \sim \! \! \sim}}{TF}}
\def\Div{\mathrm{Div}}
\begin{document}

\title{Noether's Second Theorem  on natural bundles}

\date{November 15, 2013}

\author{Jos\'{e} Navarro \thanks{Department of Mathematics,
University of Extremadura, Avda. Elvas s/n, 06071, Badajoz, Spain. \newline {\it
Email address:} navarrogarmendia@unex.es  \newline The first author has been partially supported by Junta de Extremadura and FEDER funds.}
 \and  Juan B. Sancho }


\maketitle

\begin{abstract}
Following work by I. Anderson, in this note we present a formulation of Noether's Second Theorem that is valid on any natural bundle. 
\end{abstract}


\section{Introduction}

The Noether Theorems establish, in presence of a variational principle, certain relations between symmetries and conservation laws. Since their discovery by E. Noether at the beginning of the twentieth century, they have been translated many times into modern-day mathematics (\cite{Libro}).

In the 1980's , I. Anderson (\cite{Anderson}, \cite{AndersonDos}) pointed out how the Noether Theorems can be directly formulated with the equations -- provided they are locally variational -- without explicitly mentioning the variational principle. As concerns to the Second Theorem, he proved it in some particularly important situations, including those of equations on the bundle of metrics, or the product of the bundle of metrics by a bundle of tensors.

In this note, we use these ideas to present a formulation of Noether's Second Theorem that is valid on any natural bundle.

\section{Differential geometry on the space of jets.}	

Throughout this note, $X$ will be a smooth manifold of dimension $n$.

If $F \to X$ is a bundle, its space of $\infty$-jets of sections is the projective limit:
$$ J^{\infty} F := \lim_\leftarrow J^k F $$ endowed with the initial topology of the projections $\pi_k \colon
J^{\infty} F \to J^k F$.

This topological space is also endowed with the following sheaf of ``smooth$"$ functions:
a function $f$ defined on an open set of $J^{\infty} F$ is said to be smooth if locally $f = f_k \circ \pi_k$ for some smooth function $f_k$ on an open set of $J^k F$.

We will usually  write $J := J^{\infty} F$ and the canonical projections $ J \to F $ and $J \to X $ will both be written with the same symbol $\pi$.

Let  $(U; x^i )$ be  a local chart on $X$ and let  $(V; x^i , y^\alpha) $ be an adapted chart on $F$.
This chart produces an infinite local ``chart"
$(x^i , y^\alpha , y^\alpha_I)$ on the tube  $\pi^{-1} V \subset J$.

To be precise, the functions $y^\alpha_I$,  for any multi-index $I = i_1 \ldots i_k $, are defined as:
$$ y^\alpha_I (j^\infty_x s) := \frac{\partial^{|I|} ( y^\alpha \circ s) }{\partial x^I}(x) $$
where $s$ is any representative of $j^\infty_x s$.

\medskip

Let $\mathcal{C}^\infty_{J}$ be the sheaf of smooth functions on $J$. A smooth vector field on an open set $U \subset J $ is
an $\mathbb{R}$-linear derivation $D \colon \mathcal{C}^\infty_{J |U} \to \mathcal{C}^\infty_{J |U}$.


The sheaf of $p$-forms on $J$ is defined as the direct limit of $\mathcal{C}^\infty_J$-modules:
$$ \Omega^p_{J} := \lim_\rightarrow \pi^*_k \, \Omega^p_{J^k} \ . $$






Any local section $s$ of $F \to X$ produces a local section $\widetilde{s} := j^\infty s$ of $ J \to X $, called its jet prolongation.
 
 A 1-form $\omega$ on an open set of $J$ is said to be a contact 1-form if its restriction to the jet prolongation of any local section $s$ of $F$ is zero.
 Contact 1-forms constitute a Pfaff system $\mathcal{P} \subset \Omega^1_{J}$. 

A vector field $D$ on an open set of $J$ is said to be total, or horizontal, if it is incident with the contact 1-forms. 



An infinitesimal contact transformation, or  i.c.t., is a vector field $D$ on $J $ such that, for any contact 1-form $\omega$, the Lie derivative $L_D \omega$ is again a contact 1-form. 


Projection with the canonical map $\pi \colon J \to  F $ establishes an $\mathbb{R}$-linear isomorphism of sheaves:
\begin{equation}\label{ict}
\left[ \begin{matrix} \mbox{ Infinitesimal contact } \\
\mbox{ transformations on } J  \end{matrix} \right] \ \simeq  \ \pi^* \TE
\end{equation} where $\TE$ denotes the sheaf of smooth sections of the vector bundle $TF \to F$.



The i.c.t. $\widetilde{D}$ corresponding to a section $D$ of $\pi^* TF$ is called its prolongation. 
If $VF \subset TF$ denotes the vertical bundle, the previous isomorphism, together with the canonical splitting:
$$ \pi^* TF = \pi^* TX \oplus \pi^* VF $$ allow us to decompose any i.c.t. into horizontal and vertical parts.

In particular, any vertical i.c.t. is the prolongation $\widetilde{V}$ of a section $V$ of $\pi^* VF$.




\medskip

On the space of jets, there exists a canonical decomposition of  $k$-forms:
$$ \Omega^k_{J} = \bigoplus_{p+q=k} \pi^* \Omega^p_X \otimes \Lambda^q \mathcal{P}  \ . $$

Local sections of the sheaves $\Omega^{p,q} := \pi^* \Omega^p_X \otimes \Lambda^q \mathcal{P}$ are called $(p,q)$-differential forms.
The exterior differential splits into two summands:
$$ \d = \d_h + \d_v \colon \Omega^{p,q} \longrightarrow \Omega^{p+1,q} \oplus \Omega^{p, q+1} \ , $$
and the operators $\d_h $ and $\d_v$ are called the horizontal and vertical differential, respectively.
The bicomplex $ \Omega^{\bullet, \bullet} := \{ \Omega^{p,q} , \d_h , \d_v \} $ is called the variational bicomplex: 
 $$\xymatrix{
 & & & & & &  \\
0 \ar[r] & \Omega^{0,2} \ar[r]^-{\d _h} \ar[u]^-{\d _v} & \Omega^{1,2} \ar[r]^-{\d _h} \ar[u]^-{\d _v} & \ldots  \ar[r]^-{\d _h} & \Omega^{n,2} \ar[r] \ar[u]^-{\d _v} &  0 \\
 0 \ar[r] & \Omega^{0,1}  \ar[u]^-{\d _v} \ar[r]^-{\d _h} &  \Omega^{1,1} \ar[u]^-{\d _v} \ar[r]^-{\d _h} & \ldots  \ar[r]^-{\d _h} & \Omega^{n,1} \ar[r] \ar[u]^-{\d _v} &  0 \\
\mathbb{R} \ar[r] & \Omega^{0,0} \ar[r]^-{\d _h} \ar[u]^-{\d _v} & \Omega^{1,0} \ar[u]^-{\d _v} \ar[r]^-{\d _h} & \ldots \ar[r]^-{\d _h} & \Omega^{n,0} \ar[u]^-{\d _v} & 
}
$$

\medskip

For each $q > 0$, let $\mathcal{H}^q$ be the cohomology sheaf:
$$ \mathcal{H}^q := \mathcal{H}^n ( \Omega^{\bullet , q} , \d_h ) = \Omega^{n , q} / \d_h \, \Omega^{n-1 , q} \quad .  $$



\medskip

If $\widetilde{V}$ is a vertical i.c.t., there exists a Lie derivative of $(p,q)$-differential forms:
$$L_{\widetilde{V}} \colon \Omega^{p,q} \longrightarrow \Omega^{p,q} \quad , \quad L_{\widetilde{V}} \omega := (p,q)\mbox{-component of } L_{\widetilde{V}} \omega \ . $$

In general, the relation:
\begin{equation}\label{NoDepende}
i_{\widetilde{V}} \d_h \omega = - \d _h i_{\widetilde{V}} \omega \  
\end{equation} implies that the previous Lie derivative commutes with the horizontal differential. 
Hence,  the Lie derivative of cohomology classes is defined as:
\begin{align*} 
L_{\widetilde{V}} \ \colon  \ \mathcal{H}^q &\xrightarrow{\quad} \ \mathcal{H}^q \\
[\omega ] \, &  \mapsto \ [ (n,q)\mbox{-component of } L_{\widetilde{V}} \omega]
\end{align*}

\subsection{Variational equations}

Let
$$ Equations := \mathcal{H}om ( \VE , \Omega^{n,0}) $$ be the sheaf of morphisms of $\mathcal{C}^\infty_J$-modules $\T \colon \VE \to \Omega^{n,0}$. Any local section $\T$ of this sheaf is called a source equation. 

A solution of a source equation $\T$ is a section $s$ of $F \to X$ such that $\T_{\widetilde{s}} = 0$, where $\T_{\widetilde{s}} \colon s^* \VEE \to \Omega^n_X $ is the restriction of $\T$ to the $\infty$-jet prolongation $\widetilde{s}$.

\begin{proposition}[\cite{Anderson}]\label{ELTensor}
There exists an $\mathbb{R}$-linear isomorphism of sheaves:
\begin{align*}
Equations \, \simeq \, \mathcal{H}^1 \ .
\end{align*}
\end{proposition}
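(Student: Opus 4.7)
The plan is to construct a canonical $\mathcal{C}^\infty_J$-linear map $\Phi: Equations \to \mathcal{H}^1$ and exhibit an inverse via integration by parts. The key observation is that the contact Pfaff system $\mathcal{P}$ contains a distinguished subsheaf canonically isomorphic to $\pi^* V^*F$: in an adapted chart $(x^i, y^\alpha)$ this is the $\mathcal{C}^\infty_J$-span of the zero-order contact forms $\theta^\alpha = \d y^\alpha - y^\alpha_i\,\d x^i$, and it may be described intrinsically as the set of $\theta\in\mathcal{P}$ for which the pairing $\widetilde V\mapsto \theta(\widetilde V)$ is $\mathcal{C}^\infty_J$-linear in the underlying vertical section $V$. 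Combining this inclusion with the identification $Equations = \mathcal{H}om(\pi^*VF,\pi^*\Omega^n_X) = \pi^*V^*F \otimes_{\mathcal{C}^\infty_J} \pi^*\Omega^n_X$, one obtains a natural embedding $Equations \hookrightarrow \Omega^{n,1}$; composing with the projection $\Omega^{n,1}\twoheadrightarrow \mathcal{H}^1$ defines $\Phi$. In coordinates, $\Phi$ sends a source equation $\mathcal{T}$ with $\mathcal{T}(\partial_{y^\alpha}) = E_\alpha\,\d x^1\wedge\cdots\wedge\d x^n$ to the class of the source form $E_\alpha\,\theta^\alpha \wedge \d x^1\wedge\cdots\wedge\d x^n$.

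Surjectivity of $\Phi$ is the integration-by-parts step. Any local section of $\Omega^{n,1}$ has an expansion $\omega = \sum_I A^I_\alpha\,\theta^\alpha_I\wedge \d x$, with $\d x = \d x^1\wedge\cdots\wedge\d x^n$, and the identity $\d_h\theta^\alpha_I = -\theta^\alpha_{Ij}\wedge \d x^j$ combined with the Leibniz rule for $\d_h$ yields, modulo $\d_h\Omega^{n-1,1}$, the reduction $A\,\theta^\alpha_{Ij}\wedge \d x \equiv -(D_jA)\,\theta^\alpha_I\wedge \d x$, with $D_j$ the total derivative. Iterating on $|I|$ eliminates higher-order contact factors and leaves a representative of the form $E_\alpha\,\theta^\alpha\wedge \d x$. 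For injectivity, the task is to show that if $E_\alpha\,\theta^\alpha\wedge \d x = \d_h\eta$ with $\eta\in\Omega^{n-1,1}$ of local order $N$, written as $\eta = \sum_{|I|\leq N} B^{I,k}_\alpha\theta^\alpha_I\wedge \d x^{(k)}$, then expanding $\d_h\eta$ and matching coefficients of $\theta^\alpha_I\wedge \d x$ produces a triangular system of linear equations whose top-order rows ($|I| = N+1$) force the $B^{I,k}_\alpha$ with $|I| = N$ to vanish; descending inductively one obtains $B^{I,k}_\alpha = 0$ for every $I$, whence $E_\alpha = 0$.

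The main obstacle is precisely this uniqueness step, which is a form of the fundamental lemma of the calculus of variations adapted to the variational bicomplex. Care is needed because $\eta$ is only locally of finite order, so the descent must be performed on trivializing charts; moreover, the cancellation between the total-derivative contribution and the contact-index-raising contribution in $\d_h\theta^\alpha_I$ must be tracked with full symmetrization over multi-indices, which is the algebraically delicate portion of the argument. Once injectivity is established in a chart, globalization to a sheaf isomorphism is formal, since $\Phi$ is manifestly intrinsic and both source and target are sheaves of $\mathcal{C}^\infty_J$-modules.
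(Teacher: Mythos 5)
The paper itself offers no proof of this Proposition --- it is quoted from Anderson's notes, and the surrounding text only records the explicit form (\ref{FormulaTakens}) of the isomorphism. Measured against the standard argument, your construction of $\Phi$ (embedding $Equations$ into $\Omega^{n,1}$ as the subsheaf of source forms $E_\alpha\,\theta^\alpha\wedge\d x$ via the zero-order contact forms) and your proof of surjectivity (reduction of $\sum_I A^I_\alpha\,\theta^\alpha_I\wedge\d x$ modulo $\d_h\Omega^{n-1,1}$ using $\d_h\theta^\alpha_I=-\theta^\alpha_{Ij}\wedge\d x^j$ and iterated integration by parts) are correct and are exactly the usual route.

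The gap is in injectivity. From $\d_h\eta=E_\alpha\,\theta^\alpha\wedge\d x$ you cannot conclude that the coefficients $B^{I,k}_\alpha$ of $\eta$ vanish: any $\d_h$-closed $\eta$, e.g.\ $\eta=\d_h\mu$ with $\mu\in\Omega^{n-2,1}$, satisfies $\d_h\eta=0$ with nonzero $B^{I,k}_\alpha$, so the system of coefficient equations cannot be triangular in the way you need. Concretely, writing $\d x^{(k)}:=i_{\partial_{x^k}}(\d x^1\wedge\ldots\wedge\d x^n)$ one has $\d_h\bigl(B^{I,k}_\alpha\,\theta^\alpha_I\wedge\d x^{(k)}\bigr)=-(\mathbb{D}_kB^{I,k}_\alpha)\,\theta^\alpha_I\wedge\d x-B^{I,k}_\alpha\,\theta^\alpha_{Ik}\wedge\d x$, so the top-order rows only force the \emph{symmetrized} sums $\sum_{Ik\sim J}B^{I,k}_\alpha$ to vanish (for $n=2$, $B^{1,2}_\alpha=-B^{2,1}_\alpha$ is allowed because $\theta^\alpha_{12}=\theta^\alpha_{21}$), and the lower rows mix these undetermined coefficients with total derivatives; the descent does not close. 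What must be shown is only $E_\alpha=0$, and this needs an extra idea. The standard one is the interior Euler operator $I\colon\Omega^{n,1}\to\Omega^{n,1}$, $I\bigl(\sum_I A^I_\alpha\,\theta^\alpha_I\wedge\d x\bigr)=\sum_I(-1)^{|I|}\bigl(\mathbb{D}_I A^I_\alpha\bigr)\,\theta^\alpha\wedge\d x$, for which one verifies directly that $I\circ\d_h=0$ on $\Omega^{n-1,1}$ and that $I$ is the identity on source forms; applying $I$ to $E_\alpha\,\theta^\alpha\wedge\d x=\d_h\eta$ then gives $E_\alpha=0$, and this single operator also reproves surjectivity. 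Alternatively, contract with an arbitrary vertical i.c.t.\ $\widetilde V$ and use $i_{\widetilde V}\d_h\eta=-\d_h i_{\widetilde V}\eta$ (formula (\ref{NoDepende})) to see that $E_\alpha V^\alpha\,\d x$ is $\d_h$-exact for every $V$, then extract $E_\alpha=0$ by an Euler--Lagrange/fundamental-lemma argument in the auxiliary variables $V^\alpha$, in the spirit of Lemma \ref{Submodulos}. Without one of these devices the uniqueness step is not established.
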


To be precise, if $\omega$ is a $(n,1)$-differential form, the Proposition above assures the existence of a unique $\mathcal{C}^\infty_J$-linear map $\T^\omega \colon \VE \to \Omega^{n,0}$ and a $(n-1 , 1)$-form  $\Theta^{\omega}$ (unique modulo $\d_h$-exact forms) such that, for any vertical i.c.t. $\widetilde{V}$:
\begin{equation}\label{FormulaTakens}
i_{\widetilde{V}} \omega = \T^\omega (V) + \d_h \, i_{\widetilde{V}} \Theta^{\omega} \ . 
\end{equation}

\medskip

A lagrangian density $\mathbb{L}$ is a $(n,0)$-differential form; i.e., a section of $\Omega^{n,0}$. On a local chart,
$$ \mathbb{L} = \mathcal{L} \, \d x^1 \wedge \ldots \wedge \d x^n $$ for some smooth function $\mathcal{L}$ on $J$.

The Euler-Lagrange operator $\mathcal{E} \colon \Omega^{n,0} \longrightarrow  Equations  $
is defined by the composition:
\begin{align*}
\xymatrix{
\Omega^{n,1} \ar[r] & \mathcal{H}^1= Equations\\
\Omega^{n,0} \ar[u]^-{ \d _v} \ar[ur]_{\mathcal{E}} &
} \qquad & , \qquad 
\xymatrix{
\d_v \mathbb{L}  =  \d \mathbb{L}  \ar[r] & [ \d \mathbb{L} ] \simeq \T^{\d\mathbb{L}} \\
\mathbb{L} \ar[u] \ar[ur] &
} 
\end{align*}
and the source equation $\mathcal{E}(\mathbb{L}) =  \T^{\d \mathbb{L} }$ is called the Euler-Lagrange equation associated to the lagrangian $\mathbb{L}$.

If $\T = \T^{\d \mathbb{L}} $ is the Euler-Lagrange equation associated to $\mathbb{L}$, formula (\ref{FormulaTakens}) implies that there exists $(n-1,1)$-forms $\Theta = \Theta^{\d \mathbb{L}}$, called Poincar\'{e}-Cartan forms, such that:
\begin{equation}\label{PrimeraVariacion}
\boxed{ \phantom{\frac{1}{2}} L_{\widetilde{V}} \mathbb{L} =  \T(V) + \d_h i_{\widetilde{V}} \Theta \quad  } \ .
\end{equation}

On a local chart, and using standard notations, the Euler-Lagrange tensor associated to a
lagrangian density $\mathbb{L} = \mathcal{L} \, \d x^1 \wedge \ldots \wedge \d x^n$ has the following expression:
\begin{equation}\label{FormulaEL}
 \mathcal{E}(\mathbb{L})  = \sum_{\alpha , I} (-1)^{|I |} \, \mathbb{D}_I \left(
\frac{\partial \mathcal{L}}{\partial y^\alpha_{I} }\right) \, \d y^\alpha \otimes \, \d x^1 \wedge \ldots \wedge \d x^n \ ,
\end{equation}
where $I = i_1 \ldots i_k$ is a multi-index, $\mathbb{D}_I := \mathbb{D}_{i_1} \circ \ldots \circ \mathbb{D}_{i_k}$ and $\mathbb{D}_i$ is the total derivative associated to $\partial_{x^i}$.

A source equation $\T$ is said to be locally variational if it lies on the image of the Euler-Lagrange operator; that is, if, locally, $\T = \mathcal{E} ( \mathbb{L})$ for some $\mathbb{L}$.


\subsection{Some basic lemmas}


Consider the sequence $$\Omega^{n-1,0} \xrightarrow{\ \d_h \ } \Omega^{n,0} \xrightarrow{\ \mathcal{E} \ } Equations \ . $$ 

\begin{lemma}[\cite{Anderson}]\label{EulerLagrangeSequence} The previous sequence is exact; that is:
$$\mathrm{Ker}\, \mathcal{E} = \mathrm{Im}\, \d_h \ . $$
\end{lemma}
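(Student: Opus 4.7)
The inclusion $\mathrm{Im}\,\d_h \subseteq \mathrm{Ker}\,\mathcal{E}$ is essentially formal: from $\d = \d_h+\d_v$ and $\d^2=0$, comparing bidegrees yields $\d_h\d_v + \d_v\d_h = 0$, so if $\mathbb{L} = \d_h\beta$ with $\beta\in\Omega^{n-1,0}$, then $\d_v\mathbb{L} = -\d_h\d_v\beta$ lies in $\d_h\Omega^{n-1,1}$. Its class in $\mathcal{H}^1 = \Omega^{n,1}/\d_h\Omega^{n-1,1}$ therefore vanishes, and Proposition \ref{ELTensor} gives $\mathcal{E}(\mathbb{L}) = 0$.

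For the reverse inclusion, assume $\mathcal{E}(\mathbb{L}) = 0$. The statement is local on $J$, so I may work over the tube $\pi^{-1}V$ of an adapted chart $(V;x^i,y^\alpha)$ on $F$ whose base $U := \pi(V)\subset X$ is contractible; there, $\mathbb{L} = \mathcal{L}\,\d x^1\wedge\cdots\wedge\d x^n$ for some smooth $\mathcal{L} = \mathcal{L}(x^i, y^\alpha, y^\alpha_I)$ of some finite jet order $k$, and by (\ref{FormulaEL}) the hypothesis reads
$$\sum_I (-1)^{|I|}\,\mathbb{D}_I\!\left(\frac{\partial\mathcal{L}}{\partial y^\alpha_I}\right) \,=\, 0, \qquad \text{for every index } \alpha.$$
My plan is to proceed by induction on $k$: at each step, I would use the elementary identity $f\cdot y^\alpha_{Ii} = \mathbb{D}_i(f\cdot y^\alpha_I) - \mathbb{D}_i(f)\cdot y^\alpha_I$ to peel off from $\mathbb{L}$ an explicit $\d_h$-exact piece $\d_h\beta_0$ whose removal strictly decreases the maximal jet order of $\mathcal{L}$, arriving eventually at the base case in which $\mathcal{L}$ depends only on $(x^i,y^\alpha)$; the remaining Euler--Lagrange condition then forces $\partial\mathcal{L}/\partial y^\alpha = 0$, so $\mathcal{L}$ is a function on $U$, and $\mathbb{L}$ is $\d_h$-exact on $\pi^{-1}V$ by the classical Poincar\'{e} lemma applied to the top-degree form $\mathcal{L}\,\d x^1\wedge\cdots\wedge\d x^n$ on the contractible $U$.

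The main obstacle is to organize the integration-by-parts reduction so that (i) the top-order variables $y^\alpha_J$ ($|J| = k$) can be eliminated from a smooth, a priori non-polynomial, $\mathcal{L}$, and (ii) the residual Lagrangian $\mathbb{L}' := \mathbb{L} - \d_h\beta_0$ remains Euler--Lagrange trivial. I would address (i), following \cite{Anderson}, by using the fibrewise homotopy $h_t(x^i, y^\alpha_I, y^\alpha_J) := (x^i, y^\alpha_I, t\,y^\alpha_J)$ that contracts the fibre of $J^kF\to J^{k-1}F$ onto its zero section: integration in $t$ of $h_t^*\mathcal{L}$ produces an explicit horizontal primitive for the highest-order part of $\mathbb{L}$. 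Point (ii) is then automatic from the easy inclusion already proved, since $\mathcal{E}(\mathbb{L}') = \mathcal{E}(\mathbb{L}) - \mathcal{E}(\d_h\beta_0) = 0$; the inductive hypothesis applies to $\mathbb{L}'$ and yields $\mathbb{L}' = \d_h\beta'$, whence $\mathbb{L} = \d_h(\beta_0+\beta')$, as required.
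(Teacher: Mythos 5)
The paper itself does not prove this lemma --- it simply quotes it from \cite{Anderson} as an instance of the exactness of the Euler--Lagrange resolution --- so the only question is whether your argument stands on its own. Your first inclusion is correct and standard. The converse, however, is a plan rather than a proof, and the plan has a genuine gap in step (i). The identity $f\, y^\alpha_{Ii} = \mathbb{D}_i(f\, y^\alpha_I) - \mathbb{D}_i(f)\, y^\alpha_I$ does not lower the jet order: if $f$ has order $k$, then $\mathbb{D}_i(f)$ has order $k+1$, so after peeling off the exact term the remainder a priori has \emph{higher} order than what you started with. Likewise, the partial homotopy that rescales only the top variables $y^\alpha_J$, $|J|=k$, does \emph{not} commute with the total derivatives $\mathbb{D}_i$ (a coefficient $y^\alpha_{I i}$ with $|I|=k-1$ is rescaled on one side of the comparison and not on the other), so integrating $\tfrac{d}{dt}h_t^*\mathcal{L}$ and integrating by parts does not reproduce the Euler--Lagrange expression of $\mathcal{L}$, and you cannot invoke $\mathcal{E}(\mathbb{L})=0$ to kill the non-exact remainder. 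The assertion that a null Lagrangian of order $k$ equals an order-$(k-1)$ Lagrangian plus a $\d_h$-exact term is true, but it is itself a nontrivial structure theorem about null Lagrangians; your reduction quietly assumes it.

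The standard argument (the one in \cite{Anderson}, and in Olver's book) avoids the induction altogether: rescale \emph{all} fibre variables at once, $h_t(x^i, y^\alpha_I) := (x^i, t\, y^\alpha_I)$, over a chart whose fibre coordinates make this meaningful. This homotopy \emph{does} commute with every $\mathbb{D}_i$, so integration by parts under the integral sign gives
$$ \mathcal{L} - h_0^*\mathcal{L} \ = \ \int_0^1 \sum_{\alpha, I} \left(\frac{\partial\mathcal{L}}{\partial y^\alpha_I}\right)\!(h_t)\; y^\alpha_I \ \d t \ = \ \int_0^1 \left( \sum_{\alpha} \Bigl( \sum_I (-1)^{|I|}\, \mathbb{D}_I \frac{\partial\mathcal{L}}{\partial y^\alpha_I} \Bigr)(h_t)\; y^\alpha \ + \ \mathbb{D}_i\, \eta^i_t \right) \d t \ , $$
and the first summand vanishes because $\mathcal{E}(\mathbb{L})=0$ holds identically on the tube, hence also at the rescaled jets. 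One obtains $\mathcal{L} = \mathcal{L}(x,0) + \mathbb{D}_i \bigl( \int_0^1 \eta^i_t \, \d t \bigr)$ in a single stroke, with no induction on order; the residual term $\mathcal{L}(x,0)\, \d x^1 \wedge \ldots \wedge \d x^n$ is then $\d_h$-exact by the ordinary Poincar\'{e} lemma on the contractible base, exactly as you say. Replacing your inductive scheme by this global homotopy makes the proof complete.
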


This result is a particular case of a more general statement, known as the {\it Euler-Lagrange resolution}.

The next lemma will be needed in the proof of the Noether's Second Theorem; hence, we include its proof in detail (although it can also be found in \cite{Anderson}):

\begin{lemma}\label{Submodulos}
Any  $\mathcal{C}^\infty_X$-submodule of $\Omega^{n,0}$ inside $\mathrm{Im}\, \d_h$ is indeed contained in $\Omega^{n}_X$. 
\end{lemma}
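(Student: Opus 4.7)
The plan is to leverage the submodule hypothesis by multiplying $\omega$ by arbitrary functions of $x$ and invoking the Euler-Lagrange sequence. Concretely, working in a local chart, write $\omega = \mathcal{L}\, \d x^1 \wedge \cdots \wedge \d x^n$. For every $g \in \mathcal{C}^\infty_X$ the form $g\omega$ still belongs to the submodule, hence to $\mathrm{Im}\, \d_h$, and Lemma \ref{EulerLagrangeSequence} gives $\mathcal{E}(g\omega) = 0$.

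The next step is to expand $\mathcal{E}(g\omega)$ using the explicit formula (\ref{FormulaEL}). Since $g$ depends only on base coordinates, one has $\partial(g\mathcal{L})/\partial y^\alpha_I = g\cdot \partial\mathcal{L}/\partial y^\alpha_I$, and the total derivatives $\mathbb{D}_i$ act on $g$ as ordinary partials $\partial g/\partial x^i$. Applying the Leibniz rule for $\mathbb{D}_I$ and reindexing, the identity $\mathcal{E}(g\omega) = 0$ takes the schematic form
$$\sum_{J}(-1)^{|J|}\,(\partial_J g)\; A^\alpha_J \;=\; 0, \qquad A^\alpha_J \, := \, \sum_K (-1)^{|K|}\binom{K+J}{J}\,\mathbb{D}_K\!\left(\frac{\partial\mathcal{L}}{\partial y^\alpha_{K+J}}\right),$$
for every $\alpha$, where $A^\alpha_J$ depends only on $\mathcal{L}$. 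As $g$ ranges over $\mathcal{C}^\infty_X$, at any fixed point of $X$ the derivatives $\partial_J g$ may be prescribed independently across multi-indices $J$, so the identity forces $A^\alpha_J = 0$ for all $\alpha$ and $J$.

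The final step is an induction on jet order. Let $k_\alpha$ denote the largest $|I|$ with $\partial\mathcal{L}/\partial y^\alpha_I \neq 0$. If $k_\alpha \geq 1$, choose $J$ with $|J| = k_\alpha$ and $\partial\mathcal{L}/\partial y^\alpha_J \neq 0$; in the relation $A^\alpha_J = 0$ every term with $K\neq 0$ involves a derivative of $\mathcal{L}$ of order $|K+J|>k_\alpha$ and therefore vanishes, leaving $\partial\mathcal{L}/\partial y^\alpha_J = 0$, a contradiction. The case $k_\alpha = 0$ is ruled out analogously by $A^\alpha_0 = 0$. Hence $\mathcal{L}$ is independent of every jet coordinate $y^\alpha_I$, so $\mathcal{L}\in \mathcal{C}^\infty_X$ and $\omega \in \Omega^n_X$.

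I anticipate the main obstacle to be the multi-index bookkeeping in the Leibniz expansion of $\mathbb{D}_I(g\mathcal{L})$, together with the clean justification that $\partial_J g$ can be decoupled from the jet-dependent coefficients $A^\alpha_J$; once these are handled, the induction on jet order is essentially immediate.
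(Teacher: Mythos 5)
Your proposal is correct and follows essentially the same route as the paper's own proof: both reduce the hypothesis via Lemma \ref{EulerLagrangeSequence} to the identity $\mathcal{E}(g\,\mathbb{L})=0$ for all $g\in\mathcal{C}^\infty_X$, expand it with the coordinate formula (\ref{FormulaEL}), and isolate a top-order derivative $\partial\mathcal{L}/\partial y^\alpha_K$ to reach a contradiction. The only difference is one of bookkeeping: the paper tests against a single monomial $x^K$ in coordinates centred at the point, whereas you carry out the full Leibniz expansion and decouple all coefficients $A^\alpha_J$ before running the top-order argument.
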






\begin{proof} As $ \mathrm{Im}\, \d_h = \mathrm{Ker}\, \mathcal{E}$, let $\mathbb{L}$ be an $(n,0)$-form
such that $\mathcal{E} ( f\,\mathbb{L} ) = 0$ for any smooth function $f$ on $X$.  On a neighbourhood of a point, $\mathbb{L} = \mathcal{L} \, \d x _1 \wedge \ldots \wedge \d x_n$ and we have to prove that $\mathcal{L}$ does not depend on the variables $y^\alpha_I$.

Let us suppose there exists a point $p\in J$ and indexes $\alpha, K$ such that $ ( \partial \mathcal{L} / \partial y^\alpha_K)_p \neq 0$, and let $K$  be the highest multi-index (in the lexicographic order) with this property. 

On a chart such that $x_i(p)=0$:
$$ (\mathbb{D}_K x^K)(p) = K! \neq 0 \quad , \quad (\mathbb{D}_I x^K)(p) = 0 \ \mbox{ for any } I \neq K \ . $$

Then, 
{\small
$$
\mathcal{E} (x^{K} \, \mathbb{L} )_p  = \left( 
\sum_{|I|\geq 0} (-1)^{|I |} \mathbb{D}_I \left( x^K\, 
\frac{\partial \mathcal{L}}{\partial y^\alpha_{I} }\right)  \d y^\alpha \otimes  \d X \right)_p = (-1)^{|K|}(\mathbb{D}_K x^K)_p \left( \frac{\partial \mathcal{L}}{\partial y^\alpha_K} \right)_p \neq 0 
$$ }
in contradiction with the hypothesis $\mathcal{E} ( x^K\, \mathbb{L} ) = 0$.

 

\hfill $ \square$
\end{proof}

As we have defined a Lie derivative of cohomology classes, we can therefore define a Lie derivative of source equations with a vertical i.c.t.:
$$ L_{\widetilde{V}} \colon Equations = \mathcal{H}^1  \xrightarrow{\qquad } \mathcal{H}^1 = Equations \ . $$ 

Using this definition, it can be checked that, for any vertical i.c.t. $\widetilde{V}$, the Euler-Lagrange operator commutes with the Lie derivative (\cite{Anderson}):
\begin{equation}\label{NaturalidadELagrange}
L_{\widetilde{V}} (\mathcal{E}(\mathbb{L})) \, = \, \mathcal{E} ( L_{\widetilde{V}} \mathbb{L} ) \ . 
\end{equation} 

\begin{lemma}\label{NaturalidadEL} If $\T$ is locally variational and $\widetilde{V}$ is a vertical i.c.t., then:
$$ L_{\widetilde{V}} \T = \mathcal{E} ( \T(V)) \ . $$ 
\end{lemma}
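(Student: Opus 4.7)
My plan is to reduce to the local case and then combine the first-variation formula (\ref{PrimeraVariacion}), the exactness from Lemma \ref{EulerLagrangeSequence}, and the naturality relation (\ref{NaturalidadELagrange}).

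First, observe that both sides of the claimed identity are $\mathcal{C}^\infty_J$-sections of the sheaf $Equations$, hence the identity can be checked locally. Since $\T$ is by hypothesis locally variational, on a suitable open set we may write $\T = \mathcal{E}(\mathbb{L})$ for some lagrangian density $\mathbb{L} \in \Omega^{n,0}$.

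Next, I would apply the first-variation formula (\ref{PrimeraVariacion}) to $\mathbb{L}$: there exists a Poincar\'{e}--Cartan form $\Theta \in \Omega^{n-1,1}$ such that
\begin{equation*}
L_{\widetilde{V}} \mathbb{L} = \T(V) + \d_h \, i_{\widetilde{V}} \Theta \ .
\end{equation*}
Now apply the Euler--Lagrange operator $\mathcal{E}$ to both sides. On the left, the naturality identity (\ref{NaturalidadELagrange}) gives
\begin{equation*}
\mathcal{E}(L_{\widetilde{V}} \mathbb{L}) = L_{\widetilde{V}} \mathcal{E}(\mathbb{L}) = L_{\widetilde{V}} \T \ .
\end{equation*}
On the right, $\mathcal{E}$ is $\mathbb{R}$-linear, and by Lemma \ref{EulerLagrangeSequence} the composition $\mathcal{E} \circ \d_h$ vanishes on $\Omega^{n-1,0}$, so the term $\mathcal{E}(\d_h \, i_{\widetilde{V}} \Theta)$ drops out and only $\mathcal{E}(\T(V))$ remains. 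Assembling these two computations yields $L_{\widetilde{V}} \T = \mathcal{E}(\T(V))$ locally, and hence globally by the reduction above.

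There is essentially no serious obstacle here: the proof is a direct chase through the already-established facts, and the only point requiring minor care is that the Poincar\'{e}--Cartan form $\Theta$ and the lagrangian $\mathbb{L}$ exist only locally, which is precisely why the statement starts from a locally variational $\T$ and why the first step is a reduction to a local chart.
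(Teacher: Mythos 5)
Your proposal is correct and follows essentially the same route as the paper: localize using $\T = \mathcal{E}(\mathbb{L})$, then chain $L_{\widetilde{V}}\T = \mathcal{E}(L_{\widetilde{V}}\mathbb{L}) = \mathcal{E}(\T(V) + \d_h i_{\widetilde{V}}\Theta) = \mathcal{E}(\T(V))$ via (\ref{NaturalidadELagrange}), (\ref{PrimeraVariacion}) and the vanishing of $\mathcal{E}\circ\d_h$ from Lemma \ref{EulerLagrangeSequence}. The only difference is that you make explicit the appeal to Lemma \ref{EulerLagrangeSequence} for the last step, which the paper leaves implicit.
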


\begin{proof}
Locally, $\T = \mathcal{E}(\mathbb{L})$. Applying formulae (\ref{NaturalidadELagrange}) and (\ref{PrimeraVariacion}),
$$L_{\widetilde{V}} \T = L_{\widetilde{V}} ( \mathcal{E} (\mathbb{L}) ) = \mathcal{E}( L_{\widetilde{V}} \mathbb{L}) = \mathcal{E} (\T(V) + \d_h i_{\widetilde{V}} \Theta ) = \mathcal{E} ( \T(V)) \ . $$

\hfill $\square$
\end{proof}

\medskip

\section{Noether's First Theorem}

Let $\T \colon \VE \to \Omega^{n,0} $ be a source equation.

\begin{definition}
A section $V$ of $\VE$  is an infinitesimal symmetry of $\T$ if $$L_{\widetilde{V}} \T = 0 \ .$$
\end{definition}

\begin{definition}
A section $V$ of $\VE$ generates a local conservation law for $\T$ if $\T(V)$ is locally $\d_h$-exact.
\end{definition}

\medskip

If $V$ generates a local conservation law for $\T$, then there locally exists an $(n-1,0)$-form $\omega$ such that $\T(V) = \d_h \omega$. Therefore, on any solution $s$ of $\T$:
$$\d ( \omega _{\widetilde{s}} ) = (\d_h \omega )_{\widetilde{s}} = (\T(V))_{\widetilde{s}} = \T_{\widetilde{s}} ( V_{\widetilde{s}}) =  0 \ , $$
so that $\omega_{\widetilde{s}}$ is a closed $(n-1)$-form on $X$. 

\medskip

\begin{theorem}[Noether's First Theorem, \cite{Anderson}] Let $\T$ be a locally variational source equation, and $V$ a section of $\VE$. It holds:
$$
V \mbox{ is a symmetry of } \T \quad \Leftrightarrow \quad V \mbox{ generates a local conservation law for } \T \ .  $$
\end{theorem}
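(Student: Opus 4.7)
The plan is to chain together the two key lemmas already stated, Lemma \ref{EulerLagrangeSequence} (the exactness $\mathrm{Ker}\,\mathcal{E} = \mathrm{Im}\,\d_h$) and Lemma \ref{NaturalidadEL} (the identity $L_{\widetilde{V}}\T = \mathcal{E}(\T(V))$ for locally variational $\T$), so that the whole argument reduces to a sequence of equivalences.

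First, I would unfold the definition of symmetry: by definition, $V$ is an infinitesimal symmetry of $\T$ precisely when $L_{\widetilde{V}}\T = 0$ in $\mathcal{H}^1$. Since $\T$ is assumed locally variational, Lemma \ref{NaturalidadEL} applies and rewrites this condition as $\mathcal{E}(\T(V)) = 0$, converting a statement about the source equation $\T$ into a statement about the Euler--Lagrange operator acting on the $(n,0)$-form $\T(V)$.

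Next, I would invoke Lemma \ref{EulerLagrangeSequence}: the vanishing $\mathcal{E}(\T(V))=0$ means that $\T(V)$ lies in $\mathrm{Ker}\,\mathcal{E}$, which equals $\mathrm{Im}\,\d_h$. Since the equality of sheaves $\mathrm{Ker}\,\mathcal{E} = \mathrm{Im}\,\d_h$ is local, membership of $\T(V)$ in the image sheaf is exactly the statement that $\T(V)$ is locally $\d_h$-exact, i.e.\ that $V$ generates a local conservation law. Reading the chain of equivalences in either direction gives both implications at once.

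I do not expect a real obstacle here, as all the substantive content is packaged in the two lemmas and in Proposition \ref{ELTensor}; the theorem is essentially their formal consequence. The only point worth being careful about is the sheaf-theoretic nature of Lemma \ref{EulerLagrangeSequence}: one must read the exactness at the level of the sheaves $\Omega^{n,0}$ and $Equations$, so that ``$\T(V) \in \mathrm{Im}\,\d_h$'' is interpreted as local $\d_h$-exactness rather than as the existence of a global primitive. Once that is acknowledged, the proof is two lines.
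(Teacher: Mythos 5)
Your proof is correct and follows exactly the route the paper takes: the paper's own proof simply states that the theorem is an immediate consequence of Lemma \ref{EulerLagrangeSequence} and Lemma \ref{NaturalidadEL}, and your chain of equivalences is the explicit unpacking of that remark. Your cautionary note about reading the exactness at the level of sheaves, so that membership in $\mathrm{Im}\,\d_h$ means local $\d_h$-exactness, is the right way to interpret the statement.
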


\begin{proof} It is an immediate consequence of Lemma \ref{EulerLagrangeSequence} and Lemma \ref{NaturalidadEL}.

\hfill $\square$
\end{proof}

\section{Noether's Second Theorem on natural bundles}

Let us now assume that $F \to X$ is a {\it natural bundle}; hence, any vector field $D$ on $X$ has a canonical lift to a tangent vector field $\bar{D}$ on $F$.

This allows to define  an $\mathbb{R}$-linear differential operator:
$$ \W \longrightarrow  \pi^* \TE \, = \, \W \oplus \VE \ ,  $$ and let us consider its vertical component (see, v. gr., \cite{PedroLuis}):
$$ \Delta \colon \W \longrightarrow \VE  \ . $$ 

As $\Delta$ is a total differential operator, a theorem of Takens (similar to that in p. 35 of \cite{Anderson}) allows to formulate the following definition:

\begin{definition} Let $\T$ be a source equation. Its generalized divergence is the only morphism of  $\mathcal{C}^\infty_J$-modules $\Div\, \T \colon  \W \to \Omega^{n,0}$ satisfying:
$$ \T \circ  \Delta  = \mathrm{Div}\, \T + \d_h  \circ  \Pi $$ for some total differential operator $\Pi \colon \W \longrightarrow \Omega^{n-1 , 0}$, that is unique modulo the addition of $\d_h$-exact operators.
\end{definition}


\begin{lemma}\label{VariacionDiv} If $\T$ is a locally variational source equation and $D$ is a section of $\pi^* TX$, then:
$$ L_{\Delta (D)} \T = \mathcal{E} ( \mathrm{Div} \T\, (D)) \ . $$ 
\end{lemma}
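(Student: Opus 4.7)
The plan is to chain together Lemma \ref{NaturalidadEL}, the defining identity for the generalized divergence, and the exactness statement of Lemma \ref{EulerLagrangeSequence}. Observe that $\Delta(D)$ is a section of $\VE$, so its prolongation $\widetilde{\Delta(D)}$ is a vertical i.c.t.; hence Lemma \ref{NaturalidadEL} applies directly and gives
\[
L_{\Delta(D)} \T \;=\; \mathcal{E}\bigl( \T(\Delta(D)) \bigr) .
\]

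Next I would substitute the definition of $\Div\, \T$. By definition, there is a total differential operator $\Pi \colon \W \to \Omega^{n-1,0}$ such that
\[
\T(\Delta(D)) \;=\; \Div\,\T\,(D) \;+\; \d_h\, \Pi(D) .
\]
Applying the Euler--Lagrange operator and using its $\mathbb{R}$-linearity, the last term is killed because, by Lemma \ref{EulerLagrangeSequence}, $\mathcal{E} \circ \d_h = 0$. This yields
\[
\mathcal{E}\bigl( \T(\Delta(D)) \bigr) \;=\; \mathcal{E}\bigl( \Div\,\T\,(D) \bigr) ,
\]
and combining with the previous display gives the desired equality.

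There is no real obstacle here, since both ingredients have been established: Lemma \ref{NaturalidadEL} turns the Lie derivative of a locally variational equation into an Euler--Lagrange expression, and Lemma \ref{EulerLagrangeSequence} ensures that the $\d_h$-exact remainder $\d_h \Pi(D)$ coming from the definition of the generalized divergence contributes nothing after applying $\mathcal{E}$. The only point worth being careful about is that $\Delta(D)$ is a genuine section of $\VE$ (so that its prolongation is a vertical i.c.t. and Lemma \ref{NaturalidadEL} is legitimately applicable), which is precisely how $\Delta$ was constructed as the vertical component of the lift $\W \to \pi^*\TE$.
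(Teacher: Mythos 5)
Your proof is correct and follows essentially the same route as the paper: apply Lemma \ref{NaturalidadEL} to the vertical section $\Delta(D)$, substitute the defining identity $\T(\Delta(D)) = \Div\,\T\,(D) + \d_h\,\Pi(D)$, and discard the $\d_h$-exact term since $\mathrm{Im}\,\d_h \subset \mathrm{Ker}\,\mathcal{E}$ by Lemma \ref{EulerLagrangeSequence}. The remark that $\Delta(D)$ is genuinely a section of $\VE$ is a sensible explicit check that the paper leaves implicit.
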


\begin{proof} Locally, $\T = \mathcal{E} (\mathbb{L})$. Using Lemma \ref{NaturalidadEL}:

$$ L_{\Delta (D)} \T =  \mathcal{E} ( \T (\Delta (D)) )  = \mathcal{E} (  \mathrm{Div} \T (D) + \d_h \Pi (D) ) = \mathcal{E} ( \mathrm{Div} \T  (D)) \ . $$ 

\hfill $\square$
\end{proof}

\medskip

\begin{definition}
A source equation $\T$ is natural if, for any section $D$ of $TX$, it holds $$L_{\Delta (D)}  \T = 0 \ . $$
\end{definition}


\begin{lemma}\label{Casi} Let $\T$ be a locally variational source equation. It holds:
$$ \T \mbox{ is natural }  \Leftrightarrow \ (\mathrm{Div} \T ) (\WW) \subset \Omega^{n}_X \ $$
where  $(\mathrm{Div} \T ) (\WW)$ denotes the image by $\Div \T$ of the $\mathcal{C}^{\infty}_X$-module $\WW$.
\end{lemma}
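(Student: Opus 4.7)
The plan is to string together the previous lemmas. The key chain of equivalences I want to trace is:
\begin{align*}
\T \text{ natural} \ &\Longleftrightarrow \ L_{\Delta(D)}\T=0 \ \text{for every } D \in \WW \\
&\Longleftrightarrow \ \mathcal{E}(\Div\T(D))=0 \ \text{for every } D \in \WW \\
&\Longleftrightarrow \ \Div\T(D) \in \Im\d_h \ \text{(locally) for every } D \in \WW \\
&\Longleftrightarrow \ \Div\T(D) \in \Omega^n_X \ \text{for every } D \in \WW \ .
\end{align*}
The first equivalence is the definition of naturality. The second is exactly Lemma \ref{VariacionDiv}, which here is crucial because it is the place where the hypothesis that $\T$ is locally variational enters. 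The third equivalence is the exactness of the Euler-Lagrange sequence, i.e.\ Lemma \ref{EulerLagrangeSequence}.

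For the fourth and key equivalence, the downward direction (from "in $\Omega^n_X$" to "in $\Im\d_h$") is free, since every section of $\Omega^n_X$ is manifestly a closed form on $X$ and, viewed as an $(n,0)$-form on $J$, has no dependence on the jet coordinates $y^\alpha_I$, so by the local formula (\ref{FormulaEL}) it is killed by $\mathcal{E}$; alternatively, one reads off directly from the definition of $\mathcal{E}=\mathcal{E}\circ\d_v$ applied to a $\pi^*$-form that $\d_v$ already vanishes. The upward direction is the substantive input: I want to invoke Lemma \ref{Submodulos}. For that I must first verify that $(\Div\T)(\WW)$ really is a $\mathcal{C}^\infty_X$-submodule of $\Omega^{n,0}$. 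Since $\Div\T\colon\W\to\Omega^{n,0}$ is $\mathcal{C}^\infty_J$-linear and $\WW\subset \W$ is a $\mathcal{C}^\infty_X$-submodule (multiplication by a function $f\in\mathcal{C}^\infty_X$ keeps a vector field on $X$ a vector field on $X$), the image $(\Div\T)(\WW)$ is closed under multiplication by $\mathcal{C}^\infty_X$, so it is indeed a $\mathcal{C}^\infty_X$-submodule of $\Omega^{n,0}$.

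With this observation in place, Lemma \ref{Submodulos} promotes the (a priori only local) inclusion $(\Div\T)(\WW)\subset\Im\d_h$ to the global inclusion $(\Div\T)(\WW)\subset\Omega^n_X$, closing the circle. The only conceptual point that requires care is this last step: naturality alone yields that each individual $\Div\T(D)$ is $\d_h$-exact, which is much weaker than being a form pulled back from $X$; what rescues us is precisely the $\mathcal{C}^\infty_X$-module structure of the image, together with the non-trivial Lemma \ref{Submodulos}. I expect the write-up to be short, essentially a two-line chain of citations, once the module-structure remark is made explicit.
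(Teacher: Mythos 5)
Your proof is correct and follows essentially the same route as the paper: the definition of naturality, Lemma \ref{VariacionDiv}, the exactness $\Ker \mathcal{E} = \Im \d_h$ from Lemma \ref{EulerLagrangeSequence}, and then Lemma \ref{Submodulos} applied to the $\mathcal{C}^\infty_X$-module $(\Div\T)(\WW)$. The only difference is that you spell out the easy converse inclusion $\Omega^n_X \subset \Ker\mathcal{E}$ and the verification of the module structure, both of which the paper leaves implicit.
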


\begin{proof} Using Lemma \ref{VariacionDiv},
$$ \T \mbox{ is natural } \ \Leftrightarrow \ (\mathrm{Div} \T ) (\WW) \subset \mathrm{Ker}\, \mathcal{E} = \mathrm{Im}\, \d_h   \ \Leftrightarrow \ (\mathrm{Div} \T ) (\WW) \subset \Omega^{n}_X \ ,  $$ 
where the last equivalence follows from Lemma \ref{Submodulos}, because $(\mathrm{Div} \T ) (\WW)$ is a $\mathcal{C}^{\infty}_X$-module.

\hfill $\square$
\end{proof}

\medskip

\begin{theorem}[Noether's Second Theorem]\label{Second} On a natural bundle, let $\T$ be a locally variational source equation. It holds
$$ 
\T \mbox{ is natural } \quad \Leftrightarrow \quad \mathrm{Div} \T = 0 \ .
$$
\end{theorem}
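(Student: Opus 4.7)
The plan is to establish both directions separately, with $(\Leftarrow)$ immediate and $(\Rightarrow)$ requiring the bulk of the work.

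For $(\Leftarrow)$, the argument is a one-line application of Lemma \ref{VariacionDiv}: if $\Div \T = 0$ then $L_{\Delta(D)} \T = \mathcal{E}(\Div \T (D)) = \mathcal{E}(0) = 0$ for every section $D$ of $\WW$, so $\T$ is natural.

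For $(\Rightarrow)$, I would start by invoking Lemma \ref{Casi} to obtain $(\Div \T)(\WW) \subset \Omega^n_X$. Working in a local chart $(x^i)$ on $X$, write $\Div \T(\partial_{x^j}) = g_j\, dx^1 \wedge \cdots \wedge dx^n$ with $g_j \in \C^\infty_X$; by $\C^\infty_J$-linearity of $\Div \T$ and the fact that the $\partial_{x^j}$ locally generate $\W$ as a $\C^\infty_J$-module, $\Div \T$ is completely determined by these $g_j$, so it suffices to prove $g_j = 0$.

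The idea is to absorb $\Div \T$ into the $\d_h$-exact summand of the Takens decomposition $\T \circ \Delta = \Div \T + \d_h \circ \Pi$ and then invoke uniqueness of the $\mathcal{C}^\infty_J$-linear part. Since each $g_j\, dx^1 \wedge \cdots \wedge dx^n$ lies in $\Omega^n_X$, which is locally contained in $\mathrm{Im}\, \d_h$ (any top-degree form on $X$ is locally $d$-exact, and on $\Omega^\bullet_X$ we have $\d_h = d$), we may write $g_j\, dx^1 \wedge \cdots \wedge dx^n = \d_h \alpha_j$ for some $\alpha_j \in \Omega^{n-1}_X$. I would then attempt to construct a total differential operator $\Pi' \colon \W \to \Omega^{n-1,0}$ with $\d_h \Pi' = \Div \T$; this yields the alternative decomposition $\T \circ \Delta = 0 + \d_h(\Pi + \Pi')$, and uniqueness forces $\Div \T = 0$.

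The main obstacle is constructing $\Pi'$ consistently on all of $\W$. The naive $\mathcal{C}^\infty_J$-linear extension $\Pi'(\sum h^j \partial_{x^j}) := \sum h^j \alpha_j$ does not satisfy $\d_h \Pi' = \Div \T$, because the Leibniz rule introduces the spurious error $\sum_{i,j} (\mathbb{D}_i h^j)\, dx^i \wedge \alpha_j$, which is a differential expression in $h^j$ and hence not $\mathcal{C}^\infty_J$-linear. Cancelling it requires an iterated integration-by-parts, at each step producing a $\d_h$-exact contribution that refines $\Pi'$ and a residual $\mathcal{C}^\infty_J$-linear term; the process closes because the residuals are controlled by the functions $g_j \in \C^\infty_X$ and their $x$-derivatives, and by a second application of Lemma \ref{Submodulos} any $\mathcal{C}^\infty_X$-submodule of $\Omega^{n,0}$ that remains inside $\mathrm{Im}\, \d_h$ is in $\Omega^n_X$. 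The alternative I would try in parallel is a direct appeal to the uniqueness statement: since the $\mathcal{C}^\infty_J$-linear summand is unique modulo $\d_h$ of a total differential operator, verifying that $\Div \T$ itself has this form closes the argument without an explicit construction of $\Pi'$.
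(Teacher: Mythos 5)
Your $(\Leftarrow)$ direction is correct and coincides with the paper's. The $(\Rightarrow)$ direction, however, contains a structural gap rather than a missing detail. Your entire strategy is to exhibit a total differential operator $\Pi'$ with $\d_h \circ \Pi' = \Div \T$ and then conclude by uniqueness of the $\C^\infty_J$-linear summand in the Takens decomposition. But that uniqueness statement says precisely that a $\C^\infty_J$-linear morphism factors as $\d_h \circ \Pi'$ if and only if it is zero; so constructing $\Pi'$ is not a route to the conclusion, it \emph{is} the conclusion. The ``iterated integration by parts'' cannot terminate with zero residue unless $\Div \T = 0$ already holds: integration by parts is exactly the procedure that extracts the $\C^\infty_J$-linear part of a total differential operator, and the residue it leaves behind is $\Div \T$ itself. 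The ``alternative'' you propose at the end suffers from the same circularity.

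More tellingly, the only input you feed into the argument is the conclusion of Lemma \ref{Casi}, namely $(\Div \T)(\WW) \subset \Omega^{n}_X$, and this inclusion is strictly weaker than $\Div \T = 0$: the locally defined $\C^\infty_J$-linear morphism $\sum_j h^j \partial_{x^j} \mapsto h^1\, \d x^1 \wedge \ldots \wedge \d x^n$ sends $\WW$ into $\Omega^{n}_X$ and is nonzero, so no manipulation starting from that inclusion alone can succeed. The ingredient you are missing --- and which the paper uses --- is that when $\T$ is natural, $\Div \T$ is itself a \emph{natural} (diffeomorphism-equivariant) tensor. Combined with Lemma \ref{Casi}, which shows that $\Div \T$ does not depend on the section at which it is evaluated, this exhibits $\Div \T$ as a diffeomorphism-invariant morphism $TX \to \Omega^n_X$ on the base, so that $(\Div \T)_{\widetilde{\tau_* s}} = (\Div \T)_{\widetilde{s}}$ for every local diffeomorphism $\tau$; the only such invariant tensor is zero. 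You need to restore this equivariance step to close the argument.
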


\begin{proof} If $\mathrm{Div} \T = 0$, then Lemma \ref{VariacionDiv} assures that $\T$ is natural. 

Conversely, if $\T$ is natural, it is not difficult to prove that $\Div \T$ is a natural tensor. By Lemma \ref{Casi}, it also satisfies $(\mathrm{Div} \T ) (\WW) \subset \Omega^{n}_X$. This amounts to saying that $\mathrm{Div} \T $ does not depend on the section it is evaluated; in particular,
$$ (\Div \T )_{\widetilde{\tau_* s}} = (\Div \T )_{\widetilde{s}} $$ for any section $s$ and any local diffeomorphism $\tau \colon U \to U$ on an open set of $X$. This immediately implies that $\Div \T = 0$.

\hfill $\square$
\end{proof}

\medskip

\begin{example}
If $X$ is oriented, and $F = S^2_{\circ}\, T^*X$ is the bundle of pseudo-Riemannian metrics with certain signature, then a source equation $\T = T$ is just a $2$-tensor, and the generalized divergence on a section $g$ is the standard divergence operator $\mathrm{div}_g$.

With obvious notations, Theorem \ref{Second} then says that, if $T$ is a locally variational equation on the bundle of metrics, then
$$ T \mbox{ is natural } \quad \Leftrightarrow \quad \nabla_k T^{ik} = 0 \ , $$ 
which is the usual statement of Noether's Second Theorem (compare with \cite{AndersonDos}, \cite{PedroLuis}).
\end{example}

\medskip

\begin{example}
Let $X$ be oriented, and let $F = ( S^2_{\circ}\, T^*X ) \times T^*X $ be the direct product of the bundle of pseudo-Riemannian metrics by the cotangent bundle.

In this case, a source equation is a pair $\T = ( T , J)$ of a 2-tensor $T$ and a vector field $J$. The generalized divergence, on a section $(g, A)$, is the following 1-form:
$$ (\mathrm{Div} \T)_{(\widetilde{g},\widetilde{A})} \, = \, \mathrm{div}_g T + i_J \d A + (\mathrm{div}_g J) A \ .   $$

Therefore, if $\mathsf{T} = (T,J)$ is a locally variational equation in the bundle $( S^2_{\circ}\, T^*X ) \times T^*X$, then:
$$ \mathsf{T} \mbox{ is natural } \quad \Leftrightarrow \quad \mathrm{div}_g T = - i_J F - (\mathrm{div}_g J) A \, , \quad \mbox{ for any section } (g,A) , $$ where $F = \d A$ (see \cite{Anderson}, \cite{PedroLuis}). 

Nevertheless, if we assume that $\T$ has the same value on a section $(g, A)$ that on $(g , A + \d f)$, for any smooth function $f$, then so happens with $\Div \T$, and the above statement reads:
$$ \mathsf{T} \mbox{ is natural } \quad \Leftrightarrow \quad \mathrm{div}_g T = - i_J F  \ . $$ 

On the other hand, let $E_{ab}\, := \, -   \left( {F_a}^{i} F_{bi} - 
\frac{1}{4} \, F^{ij} F_{ij} g_{ab} \right) $ be the usual energy tensor of the 2-form $F = \d A$. This tensor $E $ can be characterized as the only 2-tensor that is natural, satisfies $\mathrm{div}_g E = - i_{\partial F} F$ and fulfils certain homogeneity and normalization conditions (\cite{JMP}).

Then, in order to characterize the source equation $\T = ( E , \partial F)$, the Noether's Second Theorem implies that  any of the two first hypothesis of the above statement (the naturalness assumption or the condition on the divergence) can be replaced by the requirement of local variationality.

\end{example}



\begin{thebibliography}{0}





\bibitem{Anderson} I. M. Anderson, {\it The variational bicomplex}, Unpublished notes, 1983.

\bibitem{AndersonDos} I. M. Anderson, Aspects of the inverse problem to the calculus of variations, {\it Arch. Math. (Brno) }, {\bf 24} (1988), 181 -- 202.

\bibitem{PedroLuis} A. Fern\'{a}ndez, P. L. Garc\'{\i}a and C. Rodrigo, Stress-energy-momentum tensors in higher order variational calculus, {\it J. Geom. Phys.} {\bf 34} (2000), 41--72.

\bibitem{Libro} Y. Kosmann-Schwarzbach, {\it The Noether Theorems: Invariance and Conservation Laws in the Twentieth Century} (Springer, Berlin, 2011) 

\bibitem{JMP} J. Navarro; J. B. Sancho, Energy and electromagnetism of a differential k-form, {\it J. Math. Phys.} {\bf 53}, 102501 (2012)

\end{thebibliography}
\end{document}